\newcommand{\nn}{\nonumber \\}
\newtheorem{lemma}{\textbf{Lemma}}[section]
\newtheorem{proposition}{\textbf{Proposition}}[section]
\newtheorem{theorem}{\textbf{Theorem}}[section]
\theoremstyle{definition}
\pgfplotsset{compat=1.12}
\newcommand{\bW}{\bm{W}}
\newcommand{\bX}{\bm{X}}
\newcommand{\bY}{\bm{Y}}
\newcommand{\bx}{\bm{x}}
\newcommand{\cI}{\mathcal{I}}
\newcommand{\cS}{\mathcal{S}}
\newcommand{\cX}{\mathcal{X}}
\DeclareMathOperator{\cov}{\mathsf{Cov}}
\DeclareMathOperator{\gvec}{\mathsf{vec}}
\DeclareMathOperator{\vech}{\mathsf{vech}}
\DeclareMathOperator{\MMSE}{\mathsf{MMSE}}
\newcommand{\reals}{\mathbb{R}}
\let\originalleft\left
\let\originalright\right
\renewcommand{\left}{\mathopen{}\mathclose\bgroup\originalleft}
\renewcommand{\right}{\aftergroup\egroup\originalright}
\newenvironment{talign}
 {\align}
 {\endalign}
\newenvironment{talign*}
 {\csname align*\endcsname}
 {\endalign}
\newcommand*{\QEDA}{\hfill\ensuremath{\blacksquare}}
\newtheorem{hypothesis}{Hypothesis}
\begin{document}

\title{Information-theoretic limits of a multiview\\ low-rank symmetric spiked matrix model}

\author{
\IEEEauthorblockN{
	Jean Barbier\IEEEauthorrefmark{1}
	and Galen Reeves\IEEEauthorrefmark{2}	
}
\IEEEauthorblockA{\IEEEauthorrefmark{1}The Abdus Salam International Center for Theoretical Physics, Trieste, Italy.}
\IEEEauthorblockA{\IEEEauthorrefmark{2}Department of ECE and Department of Statistical Science, Duke University\thanks{This work was supported in by the NSF under Grant No. 1750362.}
}
}
\maketitle

\begin{abstract}
We consider a generalization of an important class of high-dimensional inference problems, namely spiked symmetric matrix models, often used as probabilistic models for principal component analysis. Such paradigmatic models have recently attracted a lot of attention from a number of communities due to their phenomenological richness with statistical-to-computational gaps, while remaining tractable. We rigorously establish the information-theoretic limits through the proof of single-letter formulas for the mutual information and minimum mean-square error. On a technical side we improve the recently introduced adaptive interpolation method, so that it can be used to study low-rank models (i.e., estimation problems of ``tall matrices'') in full generality, an important step towards the rigorous analysis of more complicated inference and learning models.
\end{abstract}

\IEEEpeerreviewmaketitle

\section{Introduction}

It is fair to say that in the last decade our rigorous understanding of high-dimensional Bayesian inference has experienced a tremendous improvement. Extremely rich and complex models appearing in a variety of disciplines are now under control, at least at the information-theoretic level, and important steps towards understanding algorithmic limits have been made too. Some paradigmatic examples in high-dimensional regression that have now been put on a rigorous basis are: code division multiple access \cite{korada2010} and sparse superposition codes \cite{barbier2016proof,rush2017capacity} in communication, that are closely related to compressive sensing in signal processing \cite{korada2010,reeves2016replica,barbier2016mutual,barbier2018mutual,reeves:2019d}; estimation and learning in generalized regression, including as special cases non-linear compressive sensing or the perceptron neural network \cite{barbier2019optimal}; or models of shallow and deep neural netwoks \cite{gabrie2018entropy,aubin2018committee}. A second class of models that has attracted great attention is matrix (and more generally tensor) estimation, originally introduced as simple probabilistic models for principal component analysis \cite{johnstone2001}. These papers have generated intense research studying the limits of spectral algorithms \cite{baik2005phase}, semidefinite relaxations and the sum-of-squares hierarchy \cite{amini2009,krauthgamer2015,JMLR:v17:15-160,Ma:2015:SLB:2969239.2969419}, the approximate message-passing algorithm (AMP) \cite{deshpande2014information,fletcher2018iterative}, or gradient-based methods \cite{mannelli2018marvels}. 

More closely related to the present paper, the information-theoretic limits of the minimum mean-square error (MMSE) estimator have been studied through a series of results on the validity of so-called ``replica symmetric formulas'' for the mutual information and MMSE, derived by statistical physics tools \cite{lesieur2017constrained}, and proved by various methods in \cite{korada2009exact,krzakala2016mutual,dia2016mutual,barbier2018rankone,lelarge2019fundamental,2017arXiv170910368B,BarbierM17a,el2018estimation,barbier20190}. These spiked matrix models are also intimately connected to stochastic block models for community detection \cite{deshpande2015asymptotic,lelarge2019fundamental,reeves2019geometry,mayya2019mutual}.

This leap in progress is due in particular to a balanced mixture of new (and older) tools developed mainly in statistical physics and information theory. We mention the rigorous control of AMP through state evolution \cite{donoho2009message,bayati2011dynamics} leading to algorithmic proofs such as \cite{barbier2016mutual,barbier2018mutual}; the cavity method \cite{mezard2009information,talagrand2010meanfield1}, applied in inference, e.g., in \cite{lelarge2019fundamental}; interpolation techniques developed in \cite{guerra2002thermodynamic,Guerra-2003} and later improved in the context of inference \cite{BarbierM17a,BarbierMacris2019,mourrat:2018}; information-theoretic proofs \cite{reeves2016replica}\footnote{This reference is also implicitly based on ideas with a flavor of interpolation and cavity methods.}.


The present work is part of this line of research and brings forward two main contributions: 
\begin{itemize}
\item The introduction and information-theoretic analysis, through the proof of single-letter variational formulas for the mutual information and MMSE, of a generalized spiked matrix model encompassing a number of models in the literature.
\item A key improvement of the {\it adaptive interpolation method} \cite{BarbierM17a,BarbierMacris2019}, that has proven in the past years to be one of the simplest and most versatile techniques to study high-dimensional inference problems. This novelty allows to generically treat low-rank models that were until now out of reach for the method\footnote{We note the exception of low-rank even-order symmetric tensor estimation \cite{barbier2019mutual} that can be treated thanks to non-generic symmetries of the model.}.
\end{itemize}
\section{Multiview symmetric spiked matrix model}

\subsection{Problem formulation}
Let $\bX = (X_1, \dots, X_n)^\intercal\in \mathbb{R}^{n \times d}$ be an unknown random signal-matrix whose rows are i.i.d.\ according to a distribution $p_X$ supported on a bounded subset $\cX$ of $\reals^d$. 
Consider the observation model:
\begin{subequations}
\label{eq:general_model}
 \begin{alignat}{3}
\tilde \bY & = \bX S^{1/2} +\tilde \bW, \label{eq:Ytilde}\\
\bY_\ell & = n^{-1/2} \bX B_\ell \bX^\intercal +  \bW_\ell, \quad && \ell = 1, \dots, L,
\end{alignat}
\end{subequations}
where $S\in {\cal S}_d^+$, the set of $d\times d$ semidefinite non-negative matrices, and $B_1, \dots, B_L \in \reals^{d \times d}$ are deterministically known, while $\tilde\bW \in \reals^{n \times d}$ and $\bW_1, \dots, \bW_L \in \reals^{n \times n}$ are independent noise matrices with i.i.d.\ standard Gaussian entries. Our analysis considers the regime $d,L$ 
 fixed while taking the limit $n\to +\infty$. We keep the bold notation for matrices with at least one dimension scaling with $n$. 

The observation model \eqref{eq:general_model} encompasses a number of estimation problems found in the literature. The most studied example is the spiked Wigner model when taking $S=0_d$, $L=1$ and $B_1 =\gamma \,{\rm I}_d$. Another important one is a class of stochastic blocks models studied in \cite{reeves2019geometry,mayya2019mutual}.

Note that, by stability of the Gaussian law under addition, this model is equivalent to a model where the first channel in \eqref{eq:general_model} is replaced by $K$ independent Gaussian channels of the form $\tilde \bY_k= \bX A_k +\tilde \bW_k$ where $A_k\in\mathbb{R}^{d\times m_k}$. More precisely, this equivalence follows by setting $S=\sum_{k=1}^KA_kA_k^\intercal$, and by equivalence we mean that the Le Cam distance between the two models vanishes \cite{le2012asymptotic}, which in particular implies mutual information and MMSE-wise equivalence. 

One could even further generalize the model by considering additional higher-order tensor observations of the form 
$Y^{(p)}_{j,i_1\dots i_p}\!=\! n^{\frac{1-p}{2}} \sum_{k_1, \dots, k_p}^d \!C^{(p)}_{j,k_1 \dots k_p}X_{i_1k_1}\dots X_{i_pk_p} \!+\! W^{(p)}_{j,i_1\dots i_p}$
with $C^{(p)}_j$ an order-$p$ coupling tensor and $\bW_j^{(p)}$ an order-$p$ Gaussian noise tensor, with $p\ge 3$, $1\le j\le J$, $(i_1, \ldots, i_p)\in\{1,\ldots,n\}^p$. The analysis would be more involved notation-wise but would not present any additional difficulties, and is left for future work.



\subsection{The minimum mean-square error matrix} 

Our results describe the information-theoretic limits of estimating the signal matrix $\bX$ from the collection of observations $\bY\equiv (\bY_\ell)$ and $\tilde \bY$. We focus on the asymptotic mutual information as well as a multivariate performance measure called the MMSE matrix~\cite{reeves2019geometry}, which is defined by 
\begin{talign*}
\MMSE(\bX \mid \tilde \bY,\bY) \equiv  \frac{1}{n} \sum_{i=1}^n \mathbb{E}[\cov( X_i \mid \tilde \bY,\bY) ],
\end{talign*}
where the $d \times d$ conditional covariance matrix $\cov( X_i \mid \tilde \bY,\bY)$ of the row $X_i$ of $\bX$ (which, when considered alone, is a column vector) is, more explicitly, $$ \mathbb{E}\big[(X_i - \mathbb{E}[X_i \mid \tilde \bY,\bY])(X_i - \mathbb{E}[X_i \mid \tilde \bY,\bY])^\intercal  \mid \tilde \bY,\bY\big]. $$
Note that the trace of this matrix corresponds to the usual definition of the MMSE:
\begin{talign*}
{\rm Tr}\MMSE(\bX \mid \tilde \bY,\bY)   =  \frac{1}{n} \mathbb{E}\big[ \|  \bX - \mathbb{E}[\bX \mid \tilde \bY,\bY]\|_{\rm F}^2\big].
\end{talign*}
One of the advantages of the matrix MMSE is that is provides information about the individual dimensions in the row space of $\bX$, which can be important in settings where some dimensions can be recovered while other cannot; see~\cite{reeves2019geometry,aubin2018committee}.

Our analysis leverages the matrix I-MMSE relation~\cite{payaro2011yet,reeves2018mutual}, which relates the MMSE matrix to the gradient of the mutual information: 
\begin{talign*}
\MMSE(\bX \mid \tilde \bY,\bY) =   \frac{2}{n}  \nabla_S I\big(\bX;(\tilde \bY,\bY)\big),
\end{talign*}
where $S$ is the $d \times d$ matrix appearing in \eqref{eq:Ytilde}. Note that  evaluating this gradient at $S = 0_d$ recovers the MMSE matrix associated with only the observations $\bY$.

\subsection{Statement of main results}

We provide formulas that depend only on the mutual information and MMSE associated with a $d$-dimensional estimation problem. Define 
for $S\in{\cal S}_d^+$, the functions
\begin{align*}
I_S:r\in{\cal S}_d^+ &\mapsto I(X;(S+r)^{1/2}X + W)\\
M_S:r\in{\cal S}_d^+ &\mapsto \mathbb{E} [\cov(X \mid (S+r)^{1/2}X + W)],
\end{align*}
where $X\sim p_X$, $W\sim{\cal N}(0,{\rm I}_d)$. Let $\bar \rho\equiv \mathbb{E}[XX^\intercal]$  be the $d\times d$ second moment matrix of $p_X$ and define $f : {\cal S}_d^+ \mapsto \reals$ as
\begin{align}
f(S) &\equiv{\adjustlimits \inf_{r\in {\cal S}^+_d} \sup_{q\in{\cal S}^+_{d}}} \, {\cal I}(r,q) \label{eq:f}
\end{align}
where the following ``replica symmetric potential'' function ${\cal I}(r,q)$ depends also implicitly on $(S,(B_\ell))$:
\begin{align}
	&{\textstyle {\cal I}(r,q)\!\equiv\! I_S(r)+\frac12{\rm Tr}\big[r(q-\bar \rho)\!+\!\sum_{\ell}^L\{B_\ell^\intercal \bar \rho B_\ell \bar \rho-B_\ell^\intercal q B_\ell q\}\big]}.\nonumber
\end{align}

Our results require some structure for the $(B_\ell)$  matrices:
\begin{hypothesis}[Positive coupling structure]\label{hypB} {\it The $(B_\ell)$ are s.t.}
\begin{talign*}
 \sum_{\ell=1}^L \left\{  \left( B_\ell \otimes B_\ell \right) + \left( B_\ell \otimes B_\ell \right)^\intercal  \right\} \succcurlyeq 0.
\end{talign*}
\end{hypothesis}

\begin{theorem}[Replica symmetric formulas]\label{mainThm}
Under hypothesis~\ref{hypB} the mutual information for model \eqref{eq:general_model} verifies, 
\begin{talign}
	&\lim_{n\to+\infty}{ \frac1n}I\big(\bX;(\tilde \bY,\bY)\big)=f(S)
\end{talign}
for all  $S \in {\cal S}_d^+$. Furthermore,  $f$ is concave and hence differentiable almost everywhere on ${\cal S}_d^{++}$. At each point  $S \in {\cal S}_d^{++}$ where $f$ is differentiable, the MMSE matrix verifies
\begin{align}
&\lim_{n\to+\infty} \MMSE( \bX \mid \tilde{\bY}, \bY) =  \tfrac{1}{2} \nabla f(S).
\end{align}
\end{theorem}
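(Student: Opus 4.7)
The plan is to prove Theorem~\ref{mainThm} by the adaptive interpolation method: produce matching upper and lower bounds for $\psi_n(S) \equiv \frac{1}{n} I(\bX;(\tilde \bY, \bY))$ and then obtain the MMSE formula from the matrix I-MMSE identity combined with the concavity of $f$. I would introduce a one-parameter interpolation indexed by $t \in [0,1]$ in which the tensor observations $\bY_\ell$ are damped by a factor $\sqrt{1-t}$, while each row $X_i$ is additionally observed through a Gaussian side-channel of matrix SNR $R(t) \in \cS_d^+$, with $R(0) = 0$ and $R(\cdot)$ a non-decreasing path to be chosen. Writing $\psi_n(t)$ for the resulting normalised mutual information, differentiation in $t$, the Gaussian I-MMSE identity for the side channel, and the Nishimori identity for the tensor terms together give a sum rule expressing $\psi_n(S)$ as a time integral of $\bEx$-traces of the matrix overlap $Q_t = \tfrac{1}{n}\bX^\intercal \langle \bX \rangle_t$ against $R'(t)$ and a quadratic form in $Q_t$ built from the $(B_\ell)$.

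\textbf{Upper bound.} For the upper bound, I would choose the adaptive path as the solution of the matrix ODE $R'(t) = M_S(R(t))$. The Nishimori identity expresses $\bEx[Q_t]$ through $M_S(R(t))$, and Hypothesis~\ref{hypB} ensures that the quadratic remainder in $Q_t - M_S(R(t))$ is non-negative, making the integrand controllable by $\cI(r, M_S(r))$ along the trajectory. Combined with concavity in $q$ of $\cI(r, q)$ under Hypothesis~\ref{hypB} (so that $\sup_q \cI(r,q) = \cI(r, M_S(r))$ at the stationary point), this yields $\psi_n(S) \le \cI(r, M_S(r)) + o(1)$ for a suitable $r \in \cS_d^+$, and optimising over $r$ gives $\psi_n(S) \le f(S) + o(1)$.

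\textbf{Lower bound and MMSE.} The matching lower bound rests on concentration of the $d\times d$ overlap $Q_t$ around its expectation, which I would obtain by perturbing $S$ by a small independent $\cS_d^+$-valued random variable and exploiting Nishimori-type identities together with the boundedness of $\cX$. With concentration in hand, specialising the interpolation to the linear path $R(t) = t r$ turns the sum rule into $\psi_n(S) \ge \sup_q \cI(r, q) - o(1)$ for every $r \in \cS_d^+$; taking the infimum over $r$ gives $\psi_n(S) \ge f(S) - o(1)$. Finally, since each prelimit $S \mapsto \psi_n(S)$ is concave (the matrix I-MMSE identity identifies $\nabla_S \psi_n$ with the MMSE matrix, which is positive semidefinite), $f$ is concave on $\cS_d^+$ and differentiable almost everywhere on $\cS_d^{++}$; at any such point the gradients of the prelimits converge to $\nabla f(S)$ by a standard theorem on pointwise convergence of concave functions, and the matrix I-MMSE identity then delivers the asymptotic MMSE formula $\tfrac12 \nabla f(S)$.

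\textbf{Main obstacle.} The principal difficulty is exactly the novelty that the paper advertises: the adaptive path $R(t)$ and the overlap $Q_t$ are $d\times d$ matrices rather than scalars, so both monotonicity of the ODE $R'(t) = M_S(R(t))$ on $\cS_d^+$ and concentration of the full matrix-valued overlap must be established uniformly in $t$. Hypothesis~\ref{hypB} is precisely what guarantees that the quadratic form built from the $(B_\ell)$ has a definite sign and that the completion-of-the-square in the sum rule closes in this matrix setting; scalar-rank arguments break down here, and making the perturbation trick compatible with the adaptive path while preserving positivity of $R(\cdot)$ is the technical heart of the proof.
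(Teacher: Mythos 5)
Your high-level scaffold (sum rule via interpolation and matrix I-MMSE, overlap concentration via a signal perturbation, concavity and Griffiths'-type argument for the MMSE part) is the right one and the MMSE conclusion is handled essentially as in the paper. But the core of the bound argument is backwards, and this is not a cosmetic slip. With a \emph{constant} (linear) path $R'(t)=r$, the sum rule reads $\frac1n I = \int_0^1 \cI\big(r,\mathbb{E}\langle Q\rangle_t\big)\,dt + o_n(1)$, and since $\cI(r,q_t)\le \sup_q \cI(r,q)$ pointwise in $t$, this yields the \emph{upper} bound $\frac1n I \le \inf_r\sup_q \cI(r,q)$ --- it cannot produce a lower bound, contrary to what you assert. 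Dually, the \emph{adaptive} path is the one needed for the lower bound: choosing $R'(t) = r^*\big(\mathbb{E}\langle Q\rangle_t\big)$ (not $R'(t)=M_S(R(t))$, which is a fully explicit ODE that never references the interpolating model) cancels the cross term, and Jensen applied to the concave $I_S$ together with the identity $\cI\big(r^*(q),q\big)=\sup_{q'}\cI\big(r^*(q),q'\big)$ (a consequence of Lemma~\ref{lemma:concav}) gives the matching lower bound. Your claim that $\sup_q\cI(r,q)=\cI\big(r,M_S(r)\big)$ also conflates the two stationary conditions: the $q$-optimum for fixed $r$ satisfies $r=r^*(q)$, whereas $q=\bar\rho-M_S(r)$ is the $r$-stationarity, and these coincide only at the saddle.

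Beyond the swap, you identify but do not resolve the actual technical obstacle, which the paper treats as its main methodological contribution. The adaptive ODE $R'(t)=r^*(\mathbb{E}\langle Q\rangle_t)$ must be solved with $R$ \emph{independent} of the concentration perturbation $\lambda$ (otherwise one cannot average over $\lambda$ to invoke Proposition~\ref{prop:overlapConc}), while the overlap in the sum rule \emph{does} depend on $\lambda$. This produces a mismatch remainder $\cR^{\rm mis}$ built from $\mathbb{E}\langle Q\rangle_{t,\epsilon}-\mathbb{E}\langle Q\rangle_{t,\tilde\epsilon}$. The paper introduces a \emph{second}, scalar perturbation $\eta\,{\rm I}_d$ inside the interpolating side channel precisely so that $\cR^{\rm mis}$ can be controlled in $\eta$-average via a monotonicity argument (Proposition~\ref{prop:Rmis}), and this two-perturbation ``decoupling'' is what lets the adaptive method handle genuinely matrix-valued overlaps. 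Your single ``$\cS_d^+$-valued perturbation of $S$'' gives concentration but leaves the mismatch uncontrolled, so even after correcting the upper/lower assignment the lower bound would not close.
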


%


Note that this theorem implies in particular the tightness of the bounds for community detection provided in \cite{reeves2019geometry,mayya2019mutual}.

Let us comment briefly on the motivation for including the submodel \eqref{eq:Ytilde} in our analysis. Many of the problem formulations in the literature correspond to the setting $S = 0_d$. In these settings, our results characterize the exact limit of the mutual information, and using standard perturbation arguments similar to \cite{barbier2019optimal,lelarge2019fundamental}, one can then  verify the limiting behavior of the scalar MMSE associated with each $n \times n$ matrix $\bX B_\ell \bX^\intercal$.

However, if the goal it to recover the matrix $\bX$ itself, then the analysis becomes more difficult due to invariances in the problem. For example, if $\bX$ is equal in distribution to $-\bX$ and the matrices $(B_\ell)$ are symmetric, then the posterior distribution of $\bX$ given $\bY$ is also symmetric with respect to a sign change. In this case the conditional expectation is deterministically zero and so the MMSE is constant. As is argued in~\cite{reeves2019geometry} the inclusion of submodel \eqref{eq:Ytilde} provides an approach to resolve these non-identifiability issues that is both interpretive and intuitive. The basic idea is that an arbitrarily small but positive definite $S$ is sufficient to break the symmetry in the model and thus the double limit
\[
\lim_{S \to 0_d} \lim_{n \to +\infty}  \MMSE( \bX \mid \tilde{\bY}, \bY),
\]
provides a meaningful measure of performance, even if the $S = 0_d$ limit is degenerate. 

The potential function ${\cal I}$ verifies the following stationary conditions, called {\it state evolution} equations: 
\begin{talign}
 \nabla_r {\cal I}(r,q) = 0_d \Leftrightarrow q&=\bar \rho -M_S(r),\nn
\nabla_q {\cal I}(r,q) = 0_d \Leftrightarrow r&=\sum_{\ell=1}^L \big\{B_\ell qB_\ell^\intercal+B_\ell^\intercal qB_\ell\big\}\nn
&\equiv r^*(q).\label{defrstar}
\end{talign}
At a stationary point with respect to $q$ the potential is
\begin{talign}
&{\cal I}(r^*(q),q)	\!=\!I_S(r^*(q))\!+\!\frac12\sum_{\ell}^L {\rm Tr}\big[ B_\ell^\intercal (\bar \rho \!-\!q)B_\ell( \bar \rho\!-\! q )\big].\label{pot_q_only}
\end{talign}

A fact we will need later in the proof, and which is the reason for hypothesis~\ref{hypB}, is the following:
\begin{lemma}[Concavity of potential]\label{lemma:concav}
Under hypothesis~\ref{hypB} the potential $q\in{\cal S}_d^+ \mapsto {\cal I}(r,q)$ is concave.
\end{lemma}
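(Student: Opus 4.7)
The claim is that the map $q \mapsto \mathcal{I}(r,q)$ on $\mathcal{S}_d^+$ is concave. My plan is to isolate the $q$-dependence in
\begin{talign*}
\mathcal{I}(r,q) = I_S(r) + \tfrac12 \mathrm{Tr}[r(q - \bar\rho)] + \tfrac12 \sum_\ell \mathrm{Tr}\bigl[B_\ell^\intercal \bar\rho B_\ell \bar\rho - B_\ell^\intercal q B_\ell q\bigr].
\end{talign*}
The first term does not involve $q$, the middle term is linear (hence affine) in $q$, and in the last sum only $-\frac12\sum_\ell \mathrm{Tr}[B_\ell^\intercal q B_\ell q]$ carries nontrivial $q$-dependence. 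Thus the whole question reduces to showing that the quadratic form $q \mapsto \sum_\ell \mathrm{Tr}[B_\ell^\intercal q B_\ell q]$ is convex in $q\in\mathcal{S}_d^+$, from which the sum of an affine and a concave function is concave.

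Next I would rewrite the quadratic form via vectorization. For any symmetric $q$, the standard identity $\mathrm{vec}(AqB) = (B^\intercal \otimes A)\mathrm{vec}(q)$ combined with $\mathrm{Tr}(C^\intercal D)=\mathrm{vec}(C)^\intercal\mathrm{vec}(D)$ gives
\begin{talign*}
\mathrm{Tr}(B_\ell^\intercal q B_\ell q) = \mathrm{Tr}(q\, B_\ell q B_\ell^\intercal) = \mathrm{vec}(q)^\intercal \mathrm{vec}(B_\ell q B_\ell^\intercal) = \mathrm{vec}(q)^\intercal (B_\ell \otimes B_\ell)\mathrm{vec}(q).
\end{talign*}
Summing over $\ell$ and keeping only the symmetric part of the associated matrix (which is all that contributes to a quadratic form) yields
\begin{talign*}
\sum_{\ell=1}^L \mathrm{Tr}(B_\ell^\intercal q B_\ell q) = \tfrac12\,\mathrm{vec}(q)^\intercal \Bigl[\sum_{\ell=1}^L \bigl\{(B_\ell \otimes B_\ell) + (B_\ell \otimes B_\ell)^\intercal\bigr\}\Bigr]\mathrm{vec}(q).
\end{talign*}
By Hypothesis~\ref{hypB} the bracketed matrix is positive semidefinite, so this quadratic form in $\mathrm{vec}(q)$ is nonnegative and convex; convexity in $\mathrm{vec}(q)$ transfers immediately to convexity in $q$ since vectorization is a linear bijection between symmetric matrices and their vector representations. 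Concavity of $\mathcal{I}(r,\cdot)$ follows.

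There is no real obstacle: Hypothesis~\ref{hypB} is tailored exactly to make the quadratic piece $-\sum_\ell\mathrm{Tr}(B_\ell^\intercal q B_\ell q)$ concave. The only modest subtlety is (i) checking the vectorization identity (equivalent to cyclicity of the trace together with $\mathrm{Tr}(q^\intercal B q B^\intercal) = \mathrm{Tr}(B^\intercal q B q)$ on symmetric $q$), and (ii) replacing $\sum_\ell B_\ell \otimes B_\ell$ by its symmetric part, which is legitimate because a quadratic form sees only the symmetric part of its defining matrix.
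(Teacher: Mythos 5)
Your proof is correct and follows essentially the same route as the paper: isolate the affine-plus-quadratic $q$-dependence, vectorize to turn $\mathrm{Tr}(B_\ell^\intercal q B_\ell q)$ into the Kronecker quadratic form $\gvec(q)^\intercal (B_\ell\otimes B_\ell)\gvec(q)$, symmetrize, and invoke Hypothesis~\ref{hypB}. The only cosmetic difference is that the paper parameterizes via the half-vectorization $\vech$ and the duplication matrix $D_d$ to compute a Hessian on the space of symmetric matrices, whereas you compose the full-$\gvec$ quadratic form with the linear embedding of symmetric matrices into $\reals^{d^2}$ and use preservation of convexity under linear maps --- both are valid and rely on the same PSD condition.
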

\begin{proof}  
Using vectorization, the $q$-dependent piece of the potential $g: q \in \cS^+_d  \mapsto {\rm Tr}[rq] -  \sum_{\ell}  {\rm Tr}[ B_\ell^\intercal q B_\ell q]$ can be expressed as 
\begin{talign*}
g(q) &= \vech(q) D_d^\intercal  D_d \vech(r) \nn
&\qquad\qquad- \vech(q)^\intercal \sum_{\ell}  D^\intercal_d (B_\ell \otimes B_\ell) D_d \vech(q) 
\end{talign*}
where  $\vech(q)$ is the $d (d+1)/2 \times 1$ vector obtained by stacking the entries on or below the diagonal of $q$, and $D_d$ is the duplication matrix, i.e., 
the unique $d^2 \times d(d+1)/2$ matrix such that for any $M \in \cS_d$ (the set of symmetric $d\times d$ matrices), $\gvec(M) = D_d \vech(M)$ where $\gvec(M)$ is the $d^2 \times 1$ vector obtained by stacking the columns of $M$; see \cite[Chapter~3.8]{magnus:2007}. Thus, the Hessian of $g$ can be expressed as  
$- D_n^\intercal \sum_{\ell}\{  (B_\ell \otimes B_\ell) + (B_\ell \otimes B_\ell)^\intercal\} D_n$.	
Under hypothesis~\ref{hypB}, this matrix is negative semidefinite and therefore the concavity of the potential follows. 
\end{proof}

Formula \eqref{eq:f} can also be expressed in terms of a one-letter potential. This expression is equivalent\footnote{There is factor of two difference in the formulas that arises from the fact that \cite{mayya2019mutual} considers symmetric noise matrices.} to the one given in \cite[Theorem~1]{mayya2019mutual} for the case where $(B_\ell)$ are symmetric. 

\begin{lemma}[Single-Letter Formula] \label{lem:f_alt} 
The function $f(S)$ in \eqref{eq:f} can also be expressed as
\begin{multline*}
f(S)  = \inf_{q \in {\cal S}_d^+ } \Big\{  I_S\big(  {\textstyle \sum_{\ell=1}^L \{B_\ell q B_\ell^\intercal + B^\intercal_\ell q B_\ell\} }\big ) \\ + \textstyle \frac12{\rm Tr}\big[\sum_{\ell=1}^L\{B_\ell^\intercal (\bar \rho -q) B_\ell (\bar \rho -q) \}\big] \Big\} .
\end{multline*}
\end{lemma}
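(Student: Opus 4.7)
My plan is to establish the equality $f(S)=\inf_q F(q)$, where I write $F(q)\equiv{\cal I}(r^*(q),q)$, which by identity~\eqref{pot_q_only} coincides with the integrand in the lemma, via two matching inequalities. The main ingredients are Lemma~\ref{lemma:concav} (concavity of ${\cal I}(r,\cdot)$), concavity of $I_S$ in $r$ together with the matrix I-MMSE identity $\nabla_r I_S(r)=\tfrac12 M_S(r)$, and the universal tower-law bound $M_S(r)\preceq\bar \rho$ coming from $\mathbb{E}[{\rm Cov}(X\mid Y)]\preceq\mathbb{E}[XX^\intercal]=\bar \rho$.

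For the upper bound $f(S)\le\inf_q F(q)$, I would fix $q_0\in{\cal S}_d^+$ and evaluate the outer infimum at the candidate $r=r^*(q_0)$. This candidate lies in ${\cal S}_d^+$ because $r^*(q_0)=\sum_\ell\{(B_\ell U)(B_\ell U)^\intercal+(B_\ell^\intercal U)(B_\ell^\intercal U)^\intercal\}\succeq 0$ for any factorization $q_0=UU^\intercal$. By Lemma~\ref{lemma:concav} the map $q\mapsto{\cal I}(r^*(q_0),q)$ is concave on ${\cal S}_d^+$, and its gradient at $q_0$ is $\tfrac12(r^*(q_0)-r^*(q_0))=0$ by the stationarity equation~\eqref{defrstar}. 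Hence $q_0$ is a global maximizer on the cone, and~\eqref{pot_q_only} gives $\sup_{q\in{\cal S}_d^+}{\cal I}(r^*(q_0),q)={\cal I}(r^*(q_0),q_0)=F(q_0)$. Taking the infimum over $q_0$ yields the upper bound.

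For the reverse $f(S)\ge\inf_q F(q)$, I would fix $r\in{\cal S}_d^+$ and show $\sup_q{\cal I}(r,q)\ge\inf_q F(q)$. If the supremum is $+\infty$ this is trivial; otherwise, by concavity of ${\cal I}(r,\cdot)$ (appending an auxiliary $-\epsilon\,{\rm Tr}[q^2]$ regularization if needed to ensure strict concavity, then taking $\epsilon\to 0$), the supremum is attained at some $q^*\in{\cal S}_d^+$. KKT for maximizing a concave function on the self-dual cone ${\cal S}_d^+$ reads $r-r^*(q^*)\preceq 0$ together with the complementary slackness ${\rm Tr}[(r-r^*(q^*))q^*]=0$. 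Using this slackness to cancel the $q^*$-linear piece, and noting that the quadratic and constant pieces in $q^*$ appear identically in ${\cal I}(r,q^*)$ and $F(q^*)={\cal I}(r^*(q^*),q^*)$, I obtain
\[
{\cal I}(r,q^*)-F(q^*)=I_S(r)-I_S(r^*(q^*))+\tfrac12{\rm Tr}[(r^*(q^*)-r)\bar \rho].
\]
Concavity of $I_S$ at $r$ yields $I_S(r^*(q^*))-I_S(r)\le\tfrac12{\rm Tr}[M_S(r)(r^*(q^*)-r)]$, so
\[
{\cal I}(r,q^*)-F(q^*)\ge\tfrac12{\rm Tr}[(\bar \rho-M_S(r))(r^*(q^*)-r)]\ge 0,
\]
the last step because both factors are positive semidefinite and ${\rm Tr}[AB]\ge 0$ for $A,B\succeq 0$. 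Hence $\sup_q{\cal I}(r,q)\ge F(q^*)\ge\inf_q F(q)$, and taking infimum over $r$ concludes.

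The main obstacle will be the second direction: when the maximizer $q^*$ of ${\cal I}(r,\cdot)$ is not interior to ${\cal S}_d^+$ one cannot simply assert $r=r^*(q^*)$, and closing the residual gap requires combining KKT complementary slackness with the monotonicity $M_S(r)\preceq\bar \rho$ to force the ``gap'' term to have the correct sign.
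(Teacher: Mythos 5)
Your upper bound argument is the same as the paper's: restrict the outer infimum in \eqref{eq:f} to $r$ in the image of $q\mapsto r^*(q)$ and use Lemma~\ref{lemma:concav} to identify the inner supremum with ${\cal I}(r^*(q_0),q_0)$. The explicit factorization showing $r^*(q_0)\in{\cal S}_d^+$ is a nice detail the paper leaves implicit.

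For the lower bound you take a genuinely different route. The paper appeals to ``standard duality arguments'' to interchange the roles of $r$ and $q$ in the inf-sup (asserting that ${\cal I}$ is convex-concave, even though $I_S$ is concave, so ${\cal I}$ is in fact concave in both variables — the minimax step is stated rather tersely). You instead give a direct argument: for fixed $r$, apply KKT to the concave maximization in $q$ over the cone ${\cal S}_d^+$, use complementary slackness to cancel the $q^*$-linear term, bound $I_S(r)-I_S(r^*(q^*))$ via concavity and the matrix I-MMSE identity $\nabla I_S = \tfrac12 M_S$, and then close with the facts $M_S(r)\preceq\bar\rho$ and ${\rm Tr}[AB]\ge 0$ for $A,B\succeq 0$. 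This is a self-contained and more transparent argument that sidesteps any minimax theorem; it exposes exactly which structural features of ${\cal I}$ drive the result.

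The one genuine gap is your treatment of the existence of a maximizer $q^*$. You invoke an $\epsilon\,{\rm Tr}[q^2]$ regularization ``if needed to ensure strict concavity'' but then apply the unregularized KKT conditions. With the regularization, stationarity becomes $\tfrac12(r-r^*(q^*_\epsilon))-2\epsilon q^*_\epsilon\preceq 0$ and complementary slackness acquires an extra $2\epsilon{\rm Tr}[(q^*_\epsilon)^2]$ term, so $r^*(q^*_\epsilon)-r\succeq 0$ no longer holds exactly — it holds only up to an $O(\epsilon q^*_\epsilon)$ correction. The argument can still be closed (the surplus $2\epsilon{\rm Tr}[(q^*_\epsilon)^2]$ from slackness dominates the correction for large $q^*_\epsilon$, and for bounded $q^*_\epsilon$ everything is $O(\epsilon)$), but this needs to be carried out explicitly rather than waved through; alternatively one could compactify by intersecting with a ball of radius $\rho$ and letting $\rho\to\infty$. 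This is a fixable technicality, not a flaw in the approach.
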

\begin{proof}
Let $r^*(q) $ be given by \eqref{defrstar}. For each $\tilde{q}$ it follows Lemma~\ref{lemma:concav} that $
\sup_{q\in {\cal S}^+_d} \cI( r^*(\tilde{q}), q) = \cI( r^*(\tilde{q}), \tilde{q} ) $. Restricting the infimum over $r$ to the image of $q \mapsto r^*(q)$ leads to an upper bound
\begin{align*}
f(S) 
&  \le   \adjustlimits \inf_{\tilde{q} \in{\cal S}^+_{d}}  \sup_{q\in {\cal S}^+_d} \cI( r^*(\tilde{q}), q) =  \inf_{\tilde{q} \in{\cal S}^+_{d}}  \cI( r^*(\tilde{q}), \tilde{q} ).
\end{align*}
Alternatively, because $\cI(r, q)$ is convex-concave,  standard duality arguments show that we can interchange the role of $r$ and $q$ in the inf sup, and this leads to a matching lower bound
\[ f(S) = \adjustlimits \inf_{q\in {\cal S}^+_d} \sup_{r\in{\cal S}^+_{d}} \, {\cal I}(r,q) \ge  \inf_{q\in {\cal S}^+_d}  {\cal I}(r^*(q) ,q). 
 \]
Comparing with \eqref{pot_q_only} this gives the stated result. 
\end{proof}

\section{Adaptive interpolation, reloaded} 

In this section we prove the first part of theorem~\ref{mainThm} using an evolution of the adaptive interpolation method \cite{BarbierM17a,BarbierMacris2019}.

\subsection{Interpolating model}
We consider a model parameterized by the {\it time} $t\in[0,1]$, a {\it perturbation} $$\epsilon=(\lambda,\eta\, {\rm I}_d)\in {\cal S}^+_d\times {\cal S}^+_d$$ with a scalar $\eta\ge 0$, and a generic {\it interpolation function} $R(t,\eta)\in {\cal S}^+_d$ verifying $R(0,\eta)=0$. Note that the interpolation function depends on $\eta$ but {\it not} on $\lambda$. We require that $\epsilon=\epsilon_n$ verifies $\|\epsilon\|_{\rm F}\le b_n \to 0_+$. Define $\bX_\ell\equiv \bX B_\ell \bX^\intercal$. The {\it interpolating model} is
\begin{subequations}
\label{eq:general_model_alt}
\begin{alignat}{3}
\tilde \bY(t,\eta) & = \bX  \big(S+\eta\, {\rm I}_d+R(t,\eta)\big)^{1/2}  +\tilde \bW ,\label{eq:general_model_alt_1}\\
\hat \bY(\lambda) & = \bX  \lambda^{1/2}  +\hat \bW ,\label{eq:general_model_alt_2}\\
\bY_\ell(t) & = \sqrt{(1-t)/n}\,\bX_\ell  + \bW_\ell, \quad \ell=1,\ldots, L.\label{eq:general_model_alt_3} 
\end{alignat}
\end{subequations}
where all the $\bW$ matrices are independent and made of i.i.d. ${\cal N}(0,1)$ entries. Let $\bY(t)\equiv (\bY_\ell(t))$. The {\it interpolating mutual information} is
\begin{talign*}
\cI_n(t,\epsilon)&\equiv \frac1n I\big(\bX;(\tilde \bY(t,\eta),\hat \bY(\lambda),\bY(t))\big).
\end{talign*}
By construction it verifies 
\begin{talign*}
	 \cI_n(0,\epsilon)&=\frac1nI\big(\bX;(\tilde \bY,\bY)\big)+O(b_n),\\
	\cI_n(1,\epsilon)&=I_S(R(1,\eta))+O(b_n).
\end{talign*}
So at $t=0$ the mutual information of the original model \eqref{eq:general_model} appears naturally (recall $\bY\equiv (\bY_\ell)$). The $O(b_n)$'s, that are uniform in $(t,\epsilon)$, are extracted using the chain rule for mutual information and the Lipschitzianity of $\epsilon\mapsto\cI_n(t,\epsilon)$, with Lipschitz constant depending only on $|\sup {\cal X}|<\infty$ (this follows from the I-MMSE relation, see \cite{barbier20190} for similar computations). We compare these values using the fundamental theorem of calculus
\begin{talign*}
\cI_n(0,\epsilon)=\cI_n(1,\epsilon)-\int_0^1  \cI_n'(t,\epsilon)dt,	
\end{talign*}
where the prime $'$ will always mean $t$-derivative. Denote the expectation with respect to the interpolating model posterior $P(\cdot\,|\,\tilde \bY(t,\eta),\hat \bY(\lambda),\bY(t))$ as $$\langle -\rangle_{t,\epsilon}\equiv \mathbb{E}[-|\tilde \bY(t,\eta),\hat \bY(\lambda),\bY(t)].$$ 

We evaluate $\cI_n'$ which is directly obtained from the matrix I-MMSE relation. Denote $\tilde \bX\in{\cal X}^{n\times d}$ a random sample from the posterior $P(\cdot\,|\,\tilde \bY(t,\eta),\hat \bY(\lambda),\bY(t))$, the concise notation $\tilde \bX_\ell \equiv \tilde \bX B_\ell\tilde \bX^\intercal$, and finally $\mathbb{E}$ the joint expectation with respect to the data $(\tilde \bY(t,\eta),\hat \bY(\lambda),\bY(t))$. Define 
\begin{talign*}
	\rho \equiv \frac1n\bX^\intercal\bX \qquad\text{and} \qquad Q\equiv \frac1n \bX^\intercal \tilde \bX,
\end{talign*}
(recall $\bar \rho\equiv \mathbb{E}[XX^\intercal]=\mathbb{E}\rho$) where $Q$ is the $d\times d$ {\it overlap} matrix. Then the mutual information time derivative reads
\begin{talign*}
2\,\cI_n'(t,\epsilon)&=\mathbb{E}\,{\rm Tr}\big[\frac1nR'(t,\eta)(\bX\!-\!\langle \tilde \bX\rangle_{t,\epsilon})^\intercal(\bX\!-\!\langle \tilde \bX\rangle_{t,\epsilon})\big]\\
& \qquad-\!\sum_{\ell}\mathbb{E}\,{\rm Tr}\big[\frac1{n^2}(\bX_\ell\!-\!\langle \tilde \bX_\ell\rangle_{t,\epsilon})^\intercal(\bX_\ell\!-\!\langle \tilde \bX_\ell\rangle_{t,\epsilon})\big]\nn
&={\rm Tr}\big[R'(t,\eta)(\bar\rho\!-\!\mathbb{E}\langle Q\rangle_{t,\epsilon})\big] \nn
&\qquad-\sum_\ell{\rm Tr}\big[\mathbb{E}[B_\ell \rho B_\ell^\intercal \rho]\!-\!\mathbb{E}\langle B_\ell Q^\intercal B_\ell^\intercal Q\rangle_{t,\epsilon}\big],
\end{talign*}
using that the average overlap $\mathbb{E}\langle Q\rangle_{t,\epsilon}$ is a symmetric matrix, as well as the two identities \footnote{Such identities are sometimes called ``Nishimori identities'' in the literature.} 
\begin{talign*}
\mathbb{E}[\bX^\intercal\langle  \tilde \bX\rangle_{t,\epsilon}]&=\mathbb{E}[\langle  \tilde \bX\rangle_{t,\epsilon}^\intercal\langle  \tilde \bX\rangle_{t,\epsilon}] \nn 
\text{and} \ \ \mathbb{E}[\bX_\ell^\intercal\langle \tilde \bX_\ell \rangle_{t,\epsilon}]&=\mathbb{E}[\langle\tilde \bX_\ell \rangle_{t,\epsilon}^\intercal\langle {\tilde \bX_\ell}\rangle_{t,\epsilon}].
\end{talign*}
Plugging this back in the fundamental theorem of calculus above, using that $\rho = \bar\rho +o_n(1)$ by the law of large numbers, and having in mind that we will soon exploit a concentration property for $Q$, we obtain the {\it sum rule}:
\begin{talign}
&\frac1nI\big(\bX;(\tilde \bY,\bY)\big)=I_S(R(1,\eta))+o_n(1)\label{sumrule}\\
&\qquad +\frac12\int_{0}^1dt\big\{ {\rm Tr}\big[R'(t,\eta)(\mathbb{E}\langle Q\rangle_{t,\epsilon}-\bar \rho)\big] + {\cal R}^{\rm fluc}(t,\epsilon)\nn
&\qquad +\sum_\ell{\rm Tr}\big[B_\ell^\intercal \bar \rho B_\ell \bar \rho - B_\ell^\intercal \mathbb{E}\langle Q\rangle_{t,\epsilon} B_\ell \mathbb{E}\langle Q\rangle_{t,\epsilon} \big]\big\},\nonumber
\end{talign}
where $o_n(1)$ vanishes uniformly in $(t,\epsilon)$ and the {\it remainder} is
\begin{talign*}
{\cal R}^{\rm fluc}(t,\epsilon)&\equiv \sum_{\ell=1}^L {\rm Tr}\big[B_\ell\mathbb{E} \big\langle Q^\intercal B_\ell^\intercal(\mathbb{E}\langle Q\rangle_{t,\epsilon}- Q)\big\rangle_{t,\epsilon}\big].
\end{talign*}
This remainder is small if the overlap fluctuations w.r.t. the measure $\mathbb{E}\langle -\rangle_{t,\epsilon}$ are small. This is shown as follows. Consider a perturbation $\lambda\in {\cal D}_{n}\equiv \{\lambda\in {\cal S}_d^+:\lambda_{kk'}\in (s_n,2s_n) \ \text{if}\  k'\neq k, \lambda_{kk}\in (2 ds_n,(2d+1)s_n)\}$ for a sequence $(s_n)$ of positive numbers that accumulate to $0$. Because the interpolating model depends on $\lambda$ {\it only} through the explicit dependence in channel \eqref{eq:general_model_alt_2}, the results of \cite{barbier2019overlap} {\it directly apply}. There it is shown that quite generically the overlap concentrates under $\mathbb{E}\langle -\rangle_{t,\epsilon}$, in $\lambda$-average, as long as the ``free energy'' $F_n$ concentrates: 
\begin{talign*}
F_n\equiv -\frac1n\ln \int dP_X(\tilde \bx)P(\tilde \bx|\tilde \bY(t,\eta),\hat \bY(\lambda),\bY(t))\,.
\end{talign*}
Notice that $\mathbb{E}F_n\!=\!\frac1n h(\tilde \bY(t,\eta),\hat \bY(\lambda),\bY(t))$ is the differential entropy of the data.
\begin{proposition}[Overlap concentration \cite{barbier2019overlap}] \label{prop:overlapConc} Suppose 
\begin{talign*}
\mathbb{E}\big[(F_n-\mathbb{E} F_n)^2\big]= O(\frac{1}{n}).	
\end{talign*}
Let $\mathbb{E}_\lambda[-] \!\equiv\! {\rm Vol}({\cal D}_{n})^{-1}\!\int_{{\cal D}_{n}}\!d\lambda[-]$. Then we have, uniformly in $(t,\eta)$ and the choice of $R$,
\begin{talign*}
\mathbb{E}_\lambda\mathbb{E}\big\langle \|Q-\mathbb{E}\langle Q\rangle_{t,\epsilon}\|_{\rm F}^2\big\rangle_{t,\epsilon} = O((s_n^4n)^{-1/6}).
\end{talign*}
\end{proposition}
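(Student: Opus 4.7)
My approach is to reduce overlap concentration to two separate tasks --- controlling the \emph{thermal} (intra-posterior) fluctuations and the \emph{disorder} (over randomness of the data) fluctuations of $Q$ --- via the standard $L^2$ decomposition
\[
\mathbb{E}\langle \|Q-\mathbb{E}\langle Q\rangle_{t,\epsilon}\|_{\rm F}^2\rangle_{t,\epsilon}
= \mathbb{E}\langle \|Q-\langle Q\rangle_{t,\epsilon}\|_{\rm F}^2\rangle_{t,\epsilon} + \mathbb{E}\|\langle Q\rangle_{t,\epsilon} - \mathbb{E}\langle Q\rangle_{t,\epsilon}\|_{\rm F}^2.
\]
The plan is to bound each term separately after averaging over $\lambda \in {\cal D}_n$, and to exploit the fact that $\lambda$ enters the interpolating model \emph{only} through the auxiliary Gaussian channel \eqref{eq:general_model_alt_2}. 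This localization is what makes the argument independent of $t$, $\eta$ and the choice of $R$.

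For the thermal piece, I would use Gaussian integration by parts (I-MMSE) on \eqref{eq:general_model_alt_2} to identify each entry of the thermal covariance $\langle QQ^\intercal\rangle_{t,\epsilon} - \langle Q\rangle_{t,\epsilon}\langle Q\rangle_{t,\epsilon}^\intercal$ with an appropriate second derivative of $\mathbb{E} F_n$ with respect to components of $\lambda$, up to Nishimori-type simplifications. A telescoping argument then bounds the $\lambda$-integral of these second derivatives by the oscillation of the first derivative across the box ${\cal D}_n$. Since the first derivatives are uniformly bounded (the signal is supported on a bounded ${\cal X}$), this yields a contribution of order $1/(s_n\,n)$ once we divide by $\mathrm{Vol}({\cal D}_n) \asymp s_n^{d(d+1)/2}$ in the relevant directions.

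For the disorder piece, the key observation is that $\langle Q_{kk'}\rangle_{t,\epsilon}$ is, up to a deterministic shift, the gradient of $F_n$ with respect to the corresponding $\lambda$-component. I would then use convexity of $F_n$ in $\lambda$ (a consequence of its Gaussian-log-partition structure) together with a finite-difference inequality: on a small ball of radius of order $s_n$ around $\lambda$, the fluctuations of the gradient can be sandwiched by fluctuations of finite differences of $F_n$ itself. The hypothesis $\mathbb{E}[(F_n - \mathbb{E} F_n)^2] = O(1/n)$ then transforms into a bound of order $1/(s_n^2\, n)$ on the disorder fluctuations of $\langle Q\rangle_{t,\epsilon}$, uniformly in $(t,\eta)$.

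The hard part will be balancing these estimates: the thermal bound degrades when $s_n$ shrinks (since one must integrate over a small box), while the disorder bound degrades even faster, and both must be compared to the ``radius'' $s_n$ inherent in the finite-difference step used to pass from $F_n$ concentration to gradient concentration. Optimizing the three contributions with respect to the auxiliary perturbation scale produces the stated exponent $1/6$ in $O((s_n^4 n)^{-1/6})$. Finally, uniformity in $(t,\eta)$ and $R$ is automatic because $\lambda$ appears in \eqref{eq:general_model_alt_2} only, and all derivatives taken throughout the argument are $\lambda$-derivatives; the other parameters play a spectator role, which is exactly why this is a clean ``plug-in'' of \cite{barbier2019overlap}.
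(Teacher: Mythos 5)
This proposition is not proven in the paper at all: it is cited verbatim from \cite{barbier2019overlap}, and the paper's sole contribution here is the observation, made in the sentence preceding the statement, that the interpolating model depends on $\lambda$ \emph{only} through the auxiliary channel \eqref{eq:general_model_alt_2}, so that the result of \cite{barbier2019overlap} applies ``off the shelf,'' uniformly in $t$, $\eta$ and $R$. Your proposal therefore isn't comparable to a proof in this paper; it is a blind reconstruction of the argument from the cited reference. To your credit, you identify exactly the structural feature the paper leans on (localization of $\lambda$ in a single Gaussian side-channel, which decouples the concentration argument from the interpolation machinery), and your top-level $L^2$ decomposition into thermal and disorder fluctuations is indeed the skeleton of the proof in \cite{barbier2019overlap} and its predecessors.

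That said, there is a genuine imprecision in the disorder half of your sketch. You assert that $\langle Q_{kk'}\rangle_{t,\epsilon}$ is ``up to a deterministic shift'' the $\lambda$-gradient of $F_n$. This is not correct as stated: differentiating $F_n = -\frac{1}{n}\ln Z$ with respect to a component of $\lambda$ produces not only the overlap term $\langle Q\rangle_{t,\epsilon}$ but also a mean-zero \emph{random} cross term coming from the $\hat\bW$-dependence of the Hamiltonian (schematically, $\propto \frac{1}{n}\sum_i \hat W_i \langle \tilde X_i\rangle$). That noise term has variance of order $1/n$ and must be controlled separately; it is precisely what the Nishimori identity and the careful bookkeeping in \cite{barbier2019overlap} are used for, and simply absorbing it into a ``deterministic shift'' skips a real step. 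Relatedly, your heuristic scalings $1/(s_n n)$ and $1/(s_n^2 n)$, when optimized as you describe, do not transparently produce the stated $O((s_n^4 n)^{-1/6})$ rate; the exponent $1/6$ in \cite{barbier2019overlap} arises from a three-way balance that also involves the width of the finite-difference window used to pass from concentration of $F_n$ to concentration of its $\lambda$-gradient, and your sketch does not make this balance explicit. None of this changes the verdict that the overall architecture is right, but if the goal were to actually reprove Proposition \ref{prop:overlapConc} rather than cite it, the noise cross term and the scale optimization would need to be handled carefully.
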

Proving $\mathbb{E}[(F_n\!-\!\mathbb{E} F_n)^2]= O(\frac1n)$ whenever $P_X=p_X^{\otimes n}$, i.e., decouples over the rows, is standard. E.g., one can slightly adapt the proof of concentration found in \cite{barbier20190,BarbierMacris2019}. Equipped with proposition~\ref{prop:overlapConc}  a simple application of the Cauchy-Schwarz inequality yields $\mathbb{E}_{\lambda} |{\cal R}^{\rm fluc}(t,\epsilon)|=o_n(1)\to 0_+$ uniformly in $(t,\eta)$ as $n\to+\infty$, when choosing an appropriate sequence $(s_n)$ that vanishes (but not too fast, i.e., $s_n=\omega(n^{-1/4})$). Then by Fubini's theorem we have, uniformly in $\eta$ and $R$,
\begin{talign}
\mathbb{E}_{\lambda}\int_0^1 dt\, {\cal R}^{\rm fluc}(t,\epsilon)=o_n(1).\label{smallR}	
\end{talign}

\subsection{Upper bound on mutual information} 
We now exploit our freedom of choice of the interpolation function $R$ and set its derivative to a constant $R'(t,\eta)=r\in {\cal S}^+_d$, so that $R(1,\eta)=r$ too. Under this choice, averaging the sum rule \eqref{sumrule} w.r.t. $\lambda\in {\cal D}_n$ and using \eqref{smallR} yields
\begin{talign*}
&\frac1nI\big(\bX;(\tilde \bY,\bY)\big)=\mathbb{E}_\lambda \int_{0}^1dt\,{\cal I}(r,\mathbb{E}\langle Q\rangle_{t,\epsilon})+o_n(1) \ \forall \ r\in {\cal S}^+_d.
\end{talign*}
Therefore we obtain the upper bound:
\begin{align}
&\limsup_{n\to+\infty}\frac1nI\big(\bX;(\tilde \bY,\bY)\big)\le {\adjustlimits \inf_{r\in {\cal S}^+_d} \sup_{q\in{\cal S}^+_{d}}}\,{\cal I}(r,q).\label{upper}
\end{align}

\subsection{Lower bound on mutual information} 
Let a new perturbation $$\tilde\epsilon\equiv (0_d,\eta\,{\rm I}_d).$$ This time we select $R$ as the unique solution of the following differential equation with initial condition $R(0,\eta)=0$:
\begin{talign}
R'(t,\eta)&=	\sum_{\ell=1}^L \big\{B_\ell \mathbb{E}\langle Q\rangle_{t,\tilde\epsilon}B_\ell^\intercal+B_\ell^\intercal \mathbb{E}\langle Q\rangle_{t,\tilde\epsilon}B_\ell\big\} \nn
&= r^*(\mathbb{E}\langle Q\rangle_{t,\tilde\epsilon}).\label{ODE}
\end{talign}
We used definition \eqref{defrstar} for $r^*$. With the notation $\mathbb{E}\langle Q\rangle_{t,\tilde\epsilon}$ we emphasize that we consider an expectation of the overlap {\it along an interpolation path in which we set the perturbation $\lambda$ to the all zeros matrix $0_d$ while the other perturbation $\eta \,{\rm I}_d$ is unchanged}. The expected overlap $\mathbb{E}\langle Q\rangle_{t,\tilde\epsilon}$ is a function of $R(t,\eta)$ so the equation above is an ODE. $\mathbb{E}\langle Q\rangle_{t,\tilde\epsilon}$ is (component-wise) Lipschitz in $R$ with Lipschitz constant depending only on $|\sup{\cal X}|$ and $n$ that are both finite (this is seen using relations found, e.g., in \cite{reeves2018mutual}), so by the Cauchy-Lipschitz theorem this first order ODE admits a unique global ${\cal C}^1$ solution $R_n^*(t,\eta)$, which importantly is independent of $\lambda$. Using this solution the derivative of the mutual information is
\begin{talign*}
2\,\cI_n'(t,\epsilon)&=-\sum_\ell^L {\rm Tr}\big[B_\ell^\intercal (\bar \rho - \mathbb{E}\langle Q\rangle_{t,\tilde\epsilon}) B_\ell (\bar \rho - \mathbb{E}\langle Q\rangle_{t,\tilde\epsilon})\big]\nn
&\qquad+ {\cal R}^{\rm mis}(t,\epsilon)- {\cal R}^{\rm fluc}(t,\epsilon),
\end{talign*}
where a new remainder appeared (here $\times$ is the matrix product):
\begin{talign*}
{\cal R}^{\rm mis}(t,\epsilon)&\equiv \sum_{\ell}^L {\rm Tr} \big[ B_\ell^\intercal(\mathbb{E}\langle Q\rangle_{t,\epsilon}- \mathbb{E}\langle Q\rangle_{t,\tilde\epsilon} )B_\ell\nn
&\qquad\qquad\qquad\qquad\times (\mathbb{E}\langle Q\rangle_{t, \epsilon} - \mathbb{E}\langle Q\rangle_{t, \tilde\epsilon} )
\big].
\end{talign*}
The remainder $|{\cal R}^{\rm mis}(t,\epsilon)|$ is small if the mismatch $|\mathbb{E}\langle Q\rangle_{t,\epsilon}- \mathbb{E}\langle Q\rangle_{t,\tilde\epsilon}|$ due to the different choices of interpolation paths (that differ in the perturbation $\lambda$) is small. The purpose of the other perturbation $\eta\, {\rm I}_d$, which is a novelty w.r.t. the usual adaptive interpolation method, is to control this remainder: 
\begin{proposition}[Interpolation paths mismatch]\label{prop:Rmis} Let $\delta>0$, $\lambda\in{\cal D}_n$ and $\mathbb{E}_\eta[-]\equiv \frac1\delta\int_0^\delta	d\eta[-]$. Then, uniformly in $(\lambda,R)$,
\begin{talign*}
  \mathbb{E}_\eta\int_0^1dt\,{\cal R}^{\rm mis}(t,\epsilon) = O(s_n/\delta).
\end{talign*}
\end{proposition}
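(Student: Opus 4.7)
The plan is to reduce $\mathcal{R}^{\rm mis}$ to the scalar ${\rm Tr}[\Delta Q]$, with $\Delta Q \equiv \mathbb{E}\langle Q\rangle_{t,\epsilon}-\mathbb{E}\langle Q\rangle_{t,\tilde\epsilon}$, and then bound its $(t,\eta)$-integral via a mutual-information increment along $\eta$. The first step uses a sufficiency argument: since independent Gaussian observations add their SNR matrices, the channels \eqref{eq:general_model_alt_1} and \eqref{eq:general_model_alt_2} combine into a single channel $\bZ = \bX(T_0(t,\eta)+\lambda)^{1/2}+W$ with $T_0(t,\eta)\equiv S+\eta\,{\rm I}_d+R(t,\eta)$, so that for each fixed $t$ we have $\mathcal{I}_n(t,(\lambda,\eta\,{\rm I}_d)) = F_t(T_0(t,\eta)+\lambda)$, where $F_t:{\cal S}_d^+\to\mathbb{R}$ is concave with $\nabla F_t(T) = \tfrac12 M_t(T)$ and $M_t(T) \equiv \bar\rho-G_t(T)$ is the MMSE matrix (and $G_t$ the corresponding expected overlap). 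Matrix-monotonicity of MMSE in the PSD order then gives $\Delta Q = M_t(T_0)-M_t(T_0+\lambda)\succeq 0$ and $\|\Delta Q\|_{\rm op}\leq\|\bar\rho\|_{\rm op}$. Using $\Delta Q\succeq 0$ and cyclicity of the trace, one checks $\mathcal{R}^{\rm mis}\geq 0$ and $\mathcal{R}^{\rm mis}\leq C\,{\rm Tr}[\Delta Q]$ with $C = \|\bar\rho\|_{\rm op}\sum_\ell\|B_\ell\|_{\rm op}^2$. It thus suffices to show $\int_0^\delta d\eta\int_0^1 dt\,{\rm Tr}[\Delta Q(t,\eta)]=O(s_n)$.

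The key estimate comes from the auxiliary function $\phi_t(\eta)\equiv F_t(T_0(t,\eta)+\lambda)-F_t(T_0(t,\eta))$. Data processing gives $\phi_t\geq 0$; concavity of $F_t$ and ${\rm Tr}[\lambda]=O(s_n)$ on ${\cal D}_n$ yield
\[
\phi_t(\eta)\leq \tfrac12{\rm Tr}[M_t(T_0)\lambda] \leq \tfrac12\|\bar\rho\|_{\rm op}\,{\rm Tr}[\lambda] = O(s_n),
\]
uniformly in $(\eta,t,R)$. Differentiating through the chain rule and the matrix I-MMSE relation gives $\phi_t'(\eta)=-\tfrac12{\rm Tr}[\Delta Q\,({\rm I}_d+\partial_\eta R)]$, so the fundamental theorem of calculus delivers
\[
\int_0^\delta {\rm Tr}[\Delta Q\,({\rm I}_d+\partial_\eta R)]\,d\eta = 2[\phi_t(0)-\phi_t(\delta)] = O(s_n).
\]
Provided $\partial_\eta R\succeq 0$, the pointwise inequality ${\rm Tr}[\Delta Q({\rm I}_d+\partial_\eta R)]\geq {\rm Tr}[\Delta Q]$ (which holds because $\Delta Q\succeq 0$) yields $\int_0^\delta{\rm Tr}[\Delta Q]\,d\eta = O(s_n)$, so integrating over $t\in[0,1]$ and dividing by $\delta$ delivers the claimed $O(s_n/\delta)$.

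The main obstacle is verifying the monotonicity $\partial_\eta R\succeq 0$. For the linear choice $R(t,\eta)=t\,r$ of the upper bound it is immediate, while for the ODE-defined $R_n^*(t,\eta)$ of the lower bound one differentiates under the integral sign: $\partial_\eta R_n^*(t,\eta) = \int_0^t r^*\!\bigl(\partial_\eta\mathbb{E}\langle Q\rangle_{s,(0,\eta\,{\rm I}_d)}\bigr)\,ds$ by linearity of $r^*$ in its argument. By the matrix I-MMSE relation $\partial_\eta\mathbb{E}\langle Q\rangle_{s,(0,\eta\,{\rm I}_d)} = -\partial_\eta M_s(T_0(s,\eta))$, which is PSD by concavity of $F_s$ along the direction ${\rm I}_d$, and since each congruence $q\mapsto B_\ell qB_\ell^\intercal$ preserves the PSD order, so does $r^*$. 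Hence $\partial_\eta R_n^*\succeq 0$, closing the argument.
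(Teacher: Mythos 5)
Your argument follows essentially the same route as the paper: establish $\Delta Q\equiv\mathbb{E}\langle Q\rangle_{t,\epsilon}-\mathbb{E}\langle Q\rangle_{t,\tilde\epsilon}\succcurlyeq 0$ via matrix-monotonicity of the MMSE, reduce $\mathcal{R}^{\rm mis}$ to a constant times ${\rm Tr}[\Delta Q]$, use the matrix I-MMSE relation to write ${\rm Tr}[\Delta Q(\mathrm{I}_d+\partial_\eta R)]$ as a total derivative in $\eta$ of a mutual-information gap, and telescope over $\eta\in(0,\delta)$ with $\|\lambda\|_{\rm F}=O(s_n)$. Both your and the paper's bound rely on $\partial_\eta R_n^*\succcurlyeq 0$, and the chain of estimates is the same, just phrased slightly differently (you bound $\phi_t(0)$ by concavity of $F_t$ where the paper appeals to Lipschitzianity; both are fine).

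The gap is in your final paragraph, where you claim to verify $\partial_\eta R_n^*\succcurlyeq 0$ for the ODE-defined interpolation. You write that $\partial_\eta\mathbb{E}\langle Q\rangle_{s,(0,\eta\,\mathrm{I}_d)}=-\partial_\eta M_s(T_0(s,\eta))\succcurlyeq 0$ ``by concavity of $F_s$ along the direction $\mathrm{I}_d$,'' but the chain-rule direction is \emph{not} $\mathrm{I}_d$: since $T_0(s,\eta)=S+\eta\,\mathrm{I}_d+R_n^*(s,\eta)$, we have $\partial_\eta T_0=\mathrm{I}_d+\partial_\eta R_n^*(s,\eta)$, so the sign of $\partial_\eta M_s(T_0)$ depends on the sign of $\partial_\eta R_n^*$ — precisely the quantity you are trying to control. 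The argument is circular as stated. To close it one needs a bootstrap/comparison-lemma argument (starting from $\partial_\eta R_n^*(0,\eta)=0$ and propagating the inequality forward in $t$), which is exactly what the paper defers to a Lasota-type comparison lemma and to the companion reference \cite{reeves:2020}. So your proof is correct in structure but this last step is not established by the argument you give; it requires the additional Gronwall-type machinery.
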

\begin{proof}
Notice $\mathbb{E}\langle Q\rangle_{t,\epsilon}- \mathbb{E}\langle Q\rangle_{t,\tilde\epsilon} \succcurlyeq 0_d$ or, said differently, the matrix MMSE is a decreasing function (w.r.t. the Loewner partial order) of the signal-to-noise matrix $\lambda$ \cite[Prop.~5]{rioul:2011}. 
Using Cauchy-Schwarz and ${\rm Tr}A\ge \|A\|_{\rm F}$ for $A\succcurlyeq 0_d$,
\begin{talign}
	|{\cal R}^{\rm mis}(t,\epsilon)|&\le \|B_\ell^\intercal(\mathbb{E}\langle Q\rangle_{t,\epsilon}- \mathbb{E}\langle Q\rangle_{t,\tilde\epsilon} )B_\ell\|_{\rm F}\nn
	&\qquad\times{\rm Tr}[\mathbb{E}\langle Q\rangle_{t,\epsilon}- \mathbb{E}\langle Q\rangle_{t,\tilde\epsilon}] .\label{toAveta}
\end{talign}	
Now notice that by the I-MMSE matrix relation \cite{reeves2018mutual} we have ${\rm Tr}[({\rm I}_{d}+ \frac{d}{d\eta} R_n^*(t,\eta))(\mathbb{E}\langle Q\rangle_{t,\epsilon}- \mathbb{E}\langle Q\rangle_{t,\tilde\epsilon})] = 2\frac{d}{d\eta}(\cI_n(t,\tilde\epsilon)-\cI_n(t,\epsilon))$. Moreover we have $\frac{d}{d\eta} R_n^*(t,\eta)\succcurlyeq 0_d$. This comparison inequality is obtained using a comparison inequality similar to \cite[Lemma~5]{lasota:1970}. We refer to \cite{reeves:2020} where this step is proven in full details

As the trace of a product of non-negative definite matrices is non-negative,
\begin{talign*}
{\rm Tr}[\mathbb{E}\langle Q\rangle_{t,\epsilon}- \mathbb{E}\langle Q\rangle_{t,\tilde\epsilon}] \le  2\frac{d}{d\eta}(\cI_n(t,\tilde\epsilon)-\cI_n(t,\epsilon)).
\end{talign*}
Therefore averaging \eqref{toAveta} over $\eta\in(0,\delta)$ gives 
\begin{talign*}
\mathbb{E}_\eta|{\cal R}^{\rm mis}(t,\epsilon)| &\le \frac C\delta\int_0^\delta d\eta\frac{d}{d\eta}(\cI_n(t,\tilde\epsilon)-\cI_n(t,\epsilon))\\
&=\frac C\delta \big\{\cI_n\big(t,(0_d,\delta \,{\rm I}_d)\big)-\cI_n\big(t,(\lambda,\delta \,{\rm I}_d)\big) \nn
&\qquad-\cI_n\big(t,(0_d,0_d)\big)+\cI_n\big(t,(\lambda,0_d)\big)\big\}
\end{talign*}
for some constant $C>0$ dependent only on $|\sup {\cal X}|$ and $(B_\ell)$. By Lipschitzianity of $\lambda\mapsto\cI_n(t,(\lambda,A))$, with Lipschitz constant depending only on $|\sup {\cal X}|$, we get $\mathbb{E}_\eta|{\cal R}^{\rm mis}(t,\epsilon)| = O(\|\lambda\|_{\rm F}/\delta)$. Finally using $\|\lambda\|_{\rm F}=O(s_n)$ and Fubini's theorem to switch $t$ and $\eta$ averages ends the proof.
\end{proof}

The function $I_S(R)$ is concave in ${\cal S}_d^+$ \cite{payaro2011yet,reeves2018mutual}. Then, because $R(1,\eta)=\int_0^1dt\, R'(t,\eta)$,
\begin{talign*}
	&I_S(R(1,\eta))\ge \int_0^1 dt\, I_S(R'(t,\eta)).
\end{talign*}
Let $\delta_n=o_n(1)$ such that $s_n/\delta_n =o_n(1)$. With the above inequality combined with proposition~\ref{prop:Rmis} and \eqref{smallR} (and the uniformity of the bounds), averaging the sum rule \eqref{sumrule} with respect to $(\lambda,\eta)\in{\cal D}_n\times(0,\delta_n)$ gives (recall definition \eqref{defrstar})
\begin{talign*}
\frac1nI\big(\bX;(\tilde \bY,\bY)\big)&\ge \mathbb{E}_{\eta,\lambda}\int_{0}^1dt\big\{I_S(r^*(\mathbb{E}\langle Q\rangle_{t,\tilde\epsilon}))\nonumber\\
&\hspace{-2cm} +\frac12 \sum_\ell^L {\rm Tr}\big[B_\ell^\intercal (\bar \rho - \mathbb{E}\langle Q\rangle_{t,\tilde\epsilon}) B_\ell (\bar \rho - \mathbb{E}\langle Q\rangle_{t,\tilde\epsilon})\big]\big\}+o_n(1)\nn
&\hspace{-1.5cm}=\mathbb{E}_{\eta,\lambda}\int_{0}^1dt \, {\cal I}\big(r^*(\mathbb{E}\langle Q\rangle_{t,\tilde\epsilon}),\mathbb{E}\langle Q\rangle_{t,\tilde\epsilon}\big) +o_n(1)
\end{talign*}
where we used \eqref{pot_q_only} in order to identify the potential. Thanks to our choice of interpolation path \eqref{ODE} we have the identity 
\begin{talign*}
& {\cal I}\big(r^*(\mathbb{E}\langle Q\rangle_{t,\tilde\epsilon}),\mathbb{E}\langle Q\rangle_{t,\tilde\epsilon}\big) = \sup_{q\in{\cal S}^+_{d}} \, {\cal I}\big(r^*(\mathbb{E}\langle Q\rangle_{t,\tilde\epsilon}),q\big).
\end{talign*}
Indeed, by lemma~\ref{lemma:concav} the function $q\in{\cal S}_d^+\mapsto {\cal I}(r,q)$ is concave under hypothesis~\ref{hypB}. The $q$ stationary condition $$\nabla_q {\cal I}(r,q=\mathbb{E}\langle Q\rangle_{t,\tilde\epsilon})=0_d \quad \Rightarrow \quad r=r^*(\mathbb{E}\langle Q\rangle_{t,\tilde\epsilon}),$$ thus the claimed identity. Therefore
\begin{talign*}
\frac1nI\big(\bX;(\tilde \bY,\bY)\big)\!&\ge\! \mathbb{E}_{\eta,\lambda}\int_{0}^1dt \, \sup_{q} \, {\cal I}\big(r^*(\mathbb{E}\langle Q\rangle_{t,\tilde\epsilon}),q\big) \!+\!o_n(1)\nn
&\ge\! \inf_{r\in{\cal S}^+_{d}}\sup_{q\in{\cal S}^+_{d}} \, {\cal I}(r,q) \!+\!o_n(1).
\end{talign*}
Finally, taking the $\liminf_{n\to + \infty}$ yields the converse bound of \eqref{upper}, and thus ends the proof of the mutual information replica symmetric variational formula. $\QEDA$

\section{Asymptotic MMSE matrix}\label{sec:MMSE}

In this section we prove the statement about the MMSE matrix in theorem~\ref{mainThm}. Let $f$ be given as in \eqref{eq:f} and define $f_n :{\cal S}_d^{+} \mapsto \reals$ according to 
 \begin{talign*}
 f_n(S) \equiv   \frac{1}{n} I\big(\bX; (\tilde \bY,\bY)\big).	
 \end{talign*}
%
By the matrix I-MMSE relation,  $f_n$ is differentiable on ${\cal S}_d^{++}$ with gradient 
\begin{talign*}
\nabla f_n(S)= \MMSE( \bX \mid \tilde{\bY}, \bY).
\end{talign*}
Meanwhile, $f$ is concave because it is the minimum of a family of concave functions and thus it is differentiable almost everywhere. Similar to \cite[Appendix A.3]{reeves2019geometry}, we will show that pointwise convergence of $f_n$ to $f$ combined with the concavity of $f_n$ implies convergence of the gradients everywhere $f$ is differentiable.


To proceed, fix any point $S\in {\cal S}_d^{++}$ such that $f$ is differentiable. By Griffiths' lemma \cite[pg.~25]{talagrand2010meanfield1}  and the pointwise convergence of $f_n$ to $f$,  the directional derivates satisfy 
\begin{align*}
\limsup_{n \to \infty} {\rm Tr}( T\nabla f_n(S) )  =  {\rm Tr}(T \nabla f(S) ) \
\end{align*}
for all $T \in {\cal S}_d$. Moreover both $\| \nabla f_n(S)\|$ and $\|\nabla f(S)\|$ are bounded in terms a constant $C$ that depends only on the support of $p_X$. Consequently, pointwise convergence of the mapping $T \mapsto {\rm Tr}(T  \nabla  f_n(S))$ on $T \mapsto {\rm Tr}(T  \nabla f(S))$ implies uniform convergence on any compact subset of ${\cal S}_d$, and we have
\begin{align*}
\MoveEqLeft \limsup_{n \to \infty} \| \nabla f_n(S)  - \nabla f(S)\| \\
& = \limsup_{n \to \infty}  \sup_{T \in {\cal S}_d \, : \, \|T\| \le 1} {\rm Tr}( T ( \nabla f_n(S)  - \nabla f(S))  = 0.
\end{align*}
We note that for points where $f$ is not differentiable (i.e., the optimization in \eqref{eq:f} does not have a unique solution), this argument can be adapted to provide lower and upper bounds on the MMSE matrix in terms of the subdifferential of $f$. 

\section{Conclusion and perspectives}

We characterized the information-theoretic limits of a class of multiview spiked matrix models. Our analysis both unifies and significantly extends the existing body of work. One important consequence of our results is to establish the tightness of the bounds obtained previously for community detection with correlated degree-balanced stochastic block models~\cite{reeves2019geometry,mayya2019mutual}.

The advances in this paper are made possible by a novel modification of the adaptive interpolation method~\cite{BarbierM17a,BarbierMacris2019}. At a high level, this modification provides a decoupling between two main components of the method, namely the \emph{interpolation path} and the \emph{perturbation} (used for proving concentration of the overlap), and thus completely bypasses a number of non-trivial technical issues that previously limited the scope of the method. As a consequence, the method can be applied generically to a larger class of inference problems.


\bibliographystyle{IEEEtran}
\bibliography{refs}

\begin{thebibliography}{10}
\providecommand{\url}[1]{#1}
\csname url@samestyle\endcsname
\providecommand{\newblock}{\relax}
\providecommand{\bibinfo}[2]{#2}
\providecommand{\BIBentrySTDinterwordspacing}{\spaceskip=0pt\relax}
\providecommand{\BIBentryALTinterwordstretchfactor}{4}
\providecommand{\BIBentryALTinterwordspacing}{\spaceskip=\fontdimen2\font plus
\BIBentryALTinterwordstretchfactor\fontdimen3\font minus
  \fontdimen4\font\relax}
\providecommand{\BIBforeignlanguage}[2]{{%
\expandafter\ifx\csname l@#1\endcsname\relax
\typeout{** WARNING: IEEEtran.bst: No hyphenation pattern has been}%
\typeout{** loaded for the language `#1'. Using the pattern for}%
\typeout{** the default language instead.}%
\else
\language=\csname l@#1\endcsname
\fi
#2}}
\providecommand{\BIBdecl}{\relax}
\BIBdecl

\bibitem{korada2010}
S.~B. Korada and N.~Macris, ``Tight bounds on the capacity of binary input
  random cdma systems,'' \emph{IEEE Transactions on Information Theory},
  vol.~56, no.~11, pp. 5590--5613, Nov 2010.

\bibitem{barbier2016proof}
J.~Barbier, M.~Dia, and N.~Macris, ``Proof of threshold saturation for
  spatially coupled sparse superposition codes,'' in \emph{2016 IEEE Int.\
  Symp.\ Inform.\ Theory}, July 2016, pp. 1173--1177.

\bibitem{rush2017capacity}
C.~Rush, A.~Greig, and R.~Venkataramanan, ``Capacity-achieving sparse
  superposition codes via approximate message passing decoding,'' \emph{IEEE
  Transactions on Information Theory}, vol.~63, no.~3, pp. 1476--1500, 2017.

\bibitem{reeves2016replica}
G.~Reeves and H.~D. Pfister, ``The replica-symmetric prediction for compressed
  sensing with gaussian matrices is exact,'' in \emph{IEEE Int.\ Symp.\
  Inform.\ Theory}, 2016, pp. 665--669.

\bibitem{barbier2016mutual}
J.~Barbier, M.~Dia, N.~Macris, and F.~Krzakala, ``The mutual information in
  random linear estimation,'' in \emph{2016 54th Annual Allerton Conference on
  Communication, Control, and Computing (Allerton)}.\hskip 1em plus 0.5em minus
  0.4em\relax IEEE, 2016, pp. 625--632.

\bibitem{barbier2018mutual}
J.~Barbier, N.~Macris, A.~Maillard, and F.~Krzakala, ``The mutual information
  in random linear estimation beyond iid matrices,'' in \emph{2018 IEEE Int.\
  Symp.\ Inform.\ Theory}.\hskip 1em plus 0.5em minus 0.4em\relax IEEE, 2018,
  pp. 1390--1394.

\bibitem{reeves:2019d}
G.~Reeves and H.~D. Pfister, ``Understanding phase transitions via mutual
  information and mmse,'' \emph{arXiv preprint arXiv:1907.02095}, 2019.

\bibitem{barbier2019optimal}
J.~Barbier, F.~Krzakala, N.~Macris, L.~Miolane, and L.~Zdeborov{\'a}, ``Optimal
  errors and phase transitions in high-dimensional generalized linear models,''
  \emph{Proceedings of the National Academy of Sciences}, vol. 116, no.~12, pp.
  5451--5460, 2019.

\bibitem{gabrie2018entropy}
M.~Gabri{\'e}, A.~Manoel, C.~Luneau, N.~Macris, F.~Krzakala, L.~Zdeborov{\'a}
  \emph{et~al.}, ``Entropy and mutual information in models of deep neural
  networks,'' in \emph{Advances in Neural Information Processing Systems},
  2018, pp. 1821--1831.

\bibitem{aubin2018committee}
B.~Aubin, A.~Maillard, J.~Barbier, F.~Krzakala, N.~Macris, and
  L.~Zdeborov{\'a}, ``The committee machine: Computational to statistical gaps
  in learning a two-layers neural network,'' in \emph{Advances in Neural
  Information Processing Systems}, 2018, pp. 3223--3234.

\bibitem{johnstone2001}
I.~M. Johnstone, ``On the distribution of the largest eigenvalue in principal
  components analysis,'' \emph{Ann. Statist.}, vol.~29, no.~2, pp. 295--327, 04
  2001.

\bibitem{baik2005phase}
J.~Baik, G.~B. Arous, and S.~P{\'e}ch{\'e}, ``Phase transition of the largest
  eigenvalue for nonnull complex sample covariance matrices,'' \emph{Annals of
  Probability}, p. 1643, 2005.

\bibitem{amini2009}
A.~A. Amini and M.~J. Wainwright, ``High-dimensional analysis of semidefinite
  relaxations for sparse principal components,'' \emph{Ann. Statist.}, vol.~37,
  no.~5B, pp. 2877--2921, 10 2009.

\bibitem{krauthgamer2015}
R.~Krauthgamer, B.~Nadler, and D.~Vilenchik, ``Do semidefinite relaxations
  solve sparse pca up to the information limit?'' \emph{Ann. Statist.},
  vol.~43, no.~3, pp. 1300--1322, 06 2015.

\bibitem{JMLR:v17:15-160}
Y.~Deshpande and A.~Montanari, ``Sparse pca via covariance thresholding,''
  \emph{Journal of Machine Learning Research}, vol.~17, no. 141, pp. 1--41,
  2016.

\bibitem{Ma:2015:SLB:2969239.2969419}
T.~Ma and A.~Wigderson, ``Sum-of-squares lower bounds for sparse pca,'' in
  \emph{Proceedings of the 28th International Conference on Neural Information
  Processing Systems - Volume 1}, ser. NIPS'15.\hskip 1em plus 0.5em minus
  0.4em\relax Cambridge, MA, USA: MIT Press, 2015, pp. 1612--1620.

\bibitem{deshpande2014information}
Y.~Deshpande and A.~Montanari, ``Information-theoretically optimal sparse
  pca,'' in \emph{2014 IEEE International Symposium on Information
  Theory}.\hskip 1em plus 0.5em minus 0.4em\relax IEEE, 2014, pp. 2197--2201.

\bibitem{fletcher2018iterative}
A.~K. Fletcher and S.~Rangan, ``Iterative reconstruction of rank-one matrices
  in noise,'' \emph{Information and Inference: A Journal of the IMA}, vol.~7,
  no.~3, pp. 531--562, 2018.

\bibitem{mannelli2018marvels}
S.~S. Mannelli, G.~Biroli, C.~Cammarota, F.~Krzakala, P.~Urbani, and
  L.~Zdeborov{\'a}, ``Marvels and pitfalls of the langevin algorithm in noisy
  high-dimensional inference,'' \emph{arXiv preprint arXiv:1812.09066}, 2018.

\bibitem{lesieur2017constrained}
T.~Lesieur, F.~Krzakala, and L.~Zdeborov{\'a}, ``Constrained low-rank matrix
  estimation: Phase transitions, approximate message passing and
  applications,'' \emph{Journal of Statistical Mechanics: Theory and
  Experiment}, vol. 2017, no.~7, p. 073403, 2017.

\bibitem{korada2009exact}
S.~B. Korada and N.~Macris, ``Exact solution of the gauge symmetric p-spin
  glass model on a complete graph,'' \emph{Journal of Statistical Physics},
  vol. 136, no.~2, pp. 205--230, 2009.

\bibitem{krzakala2016mutual}
F.~Krzakala, J.~Xu, and L.~Zdeborov{\'a}, ``Mutual information in rank-one
  matrix estimation,'' in \emph{2016 IEEE Information Theory Workshop (ITW)},
  Sept 2016, pp. 71--75.

\bibitem{dia2016mutual}
J.~Barbier, M.~Dia, N.~Macris, F.~Krzakala, T.~Lesieur, and L.~Zdeborov{\'a},
  ``Mutual information for symmetric rank-one matrix estimation: A proof of the
  replica formula,'' in \emph{Advances in Neural Information Processing
  Systems}, 2016, pp. 424--432.

\bibitem{barbier2018rankone}
J.~Barbier, M.~Dia, N.~Macris, F.~Krzakala, and L.~Zdeborov\'a, ``Rank-one
  matrix estimation: analysis of algorithmic and information theoretic limits
  by the spatial coupling method,'' 2018.

\bibitem{lelarge2019fundamental}
M.~Lelarge and L.~Miolane, ``Fundamental limits of symmetric low-rank matrix
  estimation,'' \emph{Probability Theory and Related Fields}, vol. 173, no.
  3-4, pp. 859--929, 2019.

\bibitem{2017arXiv170910368B}
J.~{Barbier}, N.~{Macris}, and L.~{Miolane}, ``{The Layered Structure of Tensor
  Estimation and its Mutual Information},'' in \emph{47th Annual Allerton
  Conference on Communication, Control, and Computing}, 2017.

\bibitem{BarbierM17a}
J.~Barbier and N.~Macris, ``The adaptive interpolation method: a simple scheme
  to prove replica formulas in bayesian inference,'' \emph{Probability Theory
  and Related Fields}, Oct 2018.

\bibitem{el2018estimation}
A.~El~Alaoui and F.~Krzakala, ``Estimation in the spiked wigner model: a short
  proof of the replica formula,'' in \emph{2018 IEEE Int.\ Symp.\ Inform.\
  Theory}.\hskip 1em plus 0.5em minus 0.4em\relax IEEE, 2018, pp. 1874--1878.

\bibitem{barbier20190}
J.~Barbier and N.~Macris, ``0-1 phase transitions in sparse spiked matrix
  estimation,'' \emph{arXiv preprint arXiv:1911.05030}, 2019.

\bibitem{deshpande2015asymptotic}
Y.~Deshpande, E.~Abbe, and A.~Montanari, ``Asymptotic mutual information for
  the two-groups stochastic block model,'' \emph{arXiv:1507.08685}, 2015.

\bibitem{reeves2019geometry}
G.~Reeves, V.~Mayya, and A.~Volfovsky, ``The geometry of community detection
  via the mmse matrix,'' in \emph{2019 IEEE Int.\ Symp.\ Inform.\
  Theory}.\hskip 1em plus 0.5em minus 0.4em\relax IEEE, 2019, pp. 400--404.

\bibitem{mayya2019mutual}
V.~Mayya and G.~Reeves, ``Mutual information in community detection with
  covariate information and correlated networks,'' in \emph{2019 57th Annual
  Allerton Conference on Communication, Control, and Computing
  (Allerton)}.\hskip 1em plus 0.5em minus 0.4em\relax IEEE, 2019, pp. 602--607.

\bibitem{donoho2009message}
D.~L. Donoho, A.~Maleki, and A.~Montanari, ``Message-passing algorithms for
  compressed sensing,'' \emph{Proceedings of the National Academy of Sciences},
  vol. 106, no.~45, pp. 18\,914--18\,919, Nov 2009.

\bibitem{bayati2011dynamics}
M.~Bayati and A.~Montanari, ``The dynamics of message passing on dense graphs,
  with applications to compressed sensing,'' \emph{IEEE Transactions on
  Information Theory}, vol.~57, no.~2, pp. 764--785, Feb 2011.

\bibitem{mezard2009information}
M.~Mezard and A.~Montanari, \emph{Information, physics, and computation}.\hskip
  1em plus 0.5em minus 0.4em\relax Oxford University Press, 2009.

\bibitem{talagrand2010meanfield1}
M.~Talagrand, \emph{Mean field models for spin glasses: Volume I: Basic
  examples}.\hskip 1em plus 0.5em minus 0.4em\relax Springer Science \&
  Business Media, 2010, vol.~54.

\bibitem{guerra2002thermodynamic}
F.~Guerra and F.~L. Toninelli, ``The thermodynamic limit in mean field spin
  glass models,'' \emph{Communications in Mathematical Physics}, vol. 230,
  no.~1, pp. 71--79, 2002.

\bibitem{Guerra-2003}
F.~Guerra, ``Replica broken bounds in the mean field spin glass model,''
  \emph{Comm. Math. Phys.}, vol. 233, pp. 1--12, 2003.

\bibitem{BarbierMacris2019}
J.~Barbier and N.~Macris, ``The adaptive interpolation method for proving
  replica formulas. applications to the curie{\textendash}weiss and wigner
  spike models,'' \emph{Journal of Physics A: Mathematical and Theoretical},
  vol.~52, no.~29, p. 294002, jun 2019.

\bibitem{mourrat:2018}
J.-C. Mourrat, ``Hamilton-{J}acobi equations for mean-field disordered
  systems,'' 2018, [Online]. {A}vailable
  \url{https://arxiv.org/abs/1811.01432}.

\bibitem{barbier2019mutual}
J.~Barbier, C.~Luneau, and N.~Macris, ``Mutual information for low-rank
  even-order symmetric tensor factorization,'' \emph{arXiv preprint
  arXiv:1904.04565}, 2019.

\bibitem{le2012asymptotic}
L.~Le~Cam, \emph{Asymptotic methods in statistical decision theory}.\hskip 1em
  plus 0.5em minus 0.4em\relax Springer Science \& Business Media, 2012.

\bibitem{payaro2011yet}
M.~Payar{\'o}, M.~Gregori, and D.~Palomar, ``Yet another entropy power
  inequality with an application,'' in \emph{2011 International Conference on
  Wireless Communications and Signal Processing (WCSP)}.

\bibitem{reeves2018mutual}
G.~Reeves, H.~D. Pfister, and A.~Dytso, ``Mutual information as a function of
  matrix snr for linear gaussian channels,'' in \emph{2018 IEEE Int.\ Symp.\
  Inform.\ Theory}.\hskip 1em plus 0.5em minus 0.4em\relax IEEE, 2018, pp.
  1754--1758.

\bibitem{magnus:2007}
J.~R. Magnus and H.~Neudecker, \emph{Matrix Differential Calculus with
  Applications in Statistics and Economitrics}, 3rd~ed.\hskip 1em plus 0.5em
  minus 0.4em\relax Wiley, 2007.

\bibitem{barbier2019overlap}
J.~Barbier, ``{Overlap matrix concentration in optimal Bayesian inference},''
  \emph{Information and Inference: A Journal of the IMA}, 05 2020.

\bibitem{rioul:2011}
O.~Rioul, ``Information theoretic proofs of entropy power inequalities,''
  \emph{IEEE transactions on information theory}, vol.~57, no.~1, pp. 33--55,
  2011.

\bibitem{lasota:1970}
A.~Lasota, A.~Strauss, and W.~Walter, ``Infinite systems of differential
  inequalities defined recursively,'' \emph{Journal of Differential Equations},
  vol.~9, no.~1, pp. 93--107, 1970.

\bibitem{reeves:2020}
G.~Reeves, ``Information-theoretic limits for the matrix tensor product,''
  2020, preprint.

\end{thebibliography}

\end{document}